\documentclass[12pt]{amsart}
\usepackage{amsmath}
\usepackage{amsxtra}
\usepackage{amstext}
\usepackage{amssymb}
\usepackage{amsthm}
\usepackage{latexsym}
\usepackage{dsfont} 
\usepackage{verbatim}
\usepackage{tabls}
\usepackage{graphicx}
\usepackage{rotating}
\usepackage{caption}
\usepackage[shortlabels]{enumitem}
\usepackage{hyperref}
\usepackage{subcaption}



\theoremstyle{plain}
\newtheorem{thm}{Theorem}[section]

\newtheorem{lem}[thm]{Lemma}
\newtheorem{conj}[thm]{Conjecture}
\newtheorem{prop}[thm]{Proposition}

\theoremstyle{definition}

\newtheorem{rem}[thm]{Remark}

\newtheorem{cla}[thm]{Claim}

\numberwithin{equation}{section}
\setcounter{tocdepth}{1}

\def\R1{\widetilde{R}}
\def\T1{\widetilde{T}}
\def\I{\mathcal{I}}

\def\supp{\operatorname{supp}}

\def\eps{\varepsilon}
\def\kap{\varkappa}
\def\R{\mathbb{R}}
\def\T{\mathcal{T}}

\def\conv{\mathcal{C}}

\def\wt{\widetilde}

\def\I{\mathcal{I}}
\def\kap{\varkappa}
\def\F{\mathcal{F}}

\def\XXint#1#2#3{{\setbox0=\hbox{$#1{#2#3}{\int}$}
     \vcenter{\hbox{$#2#3$}}\kern-.5\wd0}}


\title[R\'{e}nyi entropy]{Remarks on the R\'{e}nyi Entropy of a sum of IID random variables}
\author[Jaye]{Benjamin Jaye}
\address{School of Mathematical Sciences, Clemson University}
\email{bjaye@clemson.edu}
\author[Livshyts]{Galyna V. Livshyts}
\address{Department of Mathematics, Georgia Tech}
\email{glivshyts6@math.gatech.edu}
\author[Paouris]{Grigoris Paouris}
\address{Department of Mathematics, Texas A\&M}
\email{grigoris@math.tamu.edu}
\author[Pivovarov]{Peter Pivovarov}
\address{Department of Mathematics, University of Missouri}
\email{pivovarovp@missouri.edu}

\begin{document}
\maketitle

\begin{abstract} 
In this note we study a conjecture of Madiman and Wang \cite{MW} which
predicted that the generalized Gaussian distribution minimizes the
R\'{e}nyi entropy of the sum of independent random variables. Through
a variational analysis, we show that the generalized Gaussian fails to
be a minimizer for the problem.
\end{abstract}

\section{Introduction}

For $p>1$, the $p$-R\'{e}nyi \cite{Re} entropy of a (continuous)
random vector $X$ in $\R^d$ distributed with density $f$ is defined by
$$h_p(X) = - \frac{1}{p-1}\log \int_{\R^d} f(x)^p d\mu_d(x) = -\frac{1}{p-1}\log \|f\|_p^p,
$$ where $\mu_d$ denotes the $d$-dimensional Lebesgue measure.  As
$p\to 1^+$, $h_p(X)$ converges to the usual Shannon
entropy \begin{equation*} h(X) = -\int_{\R^d} f(x)\log f(x)
  d\mu_d(x)\end{equation*} (provided that the density of $X$ is
sufficiently regular to justify passage of the limit). For the entropy
power $N(X)=\exp(2h(X)/d)$, the fundamental entropy power inequality
(EPI) of Shannon \cite{Shan} asserts that for independent random
vectors $X_1$ and $X_2$,
\begin{equation*}
  N(X_1+X_2)\geq N(Z_1+Z_2),
\end{equation*}where $Z_1$, $Z_2$ are independent Gaussians satisfying
$N(X_i)= N(Z_i), i=1,2$. A firm connection between the EPI,
$p$-R\'{e}nyi entropy and fundamental results like the Brunn-Minkowski
and Young's convolution inequalities goes back to Dembo, Cover and
Thomas \cite{DCT}.  See Principe \cite{Pr} for more information about
where the R\'{e}nyi entropy arises; see also Bobkov, Marsiglietti
\cite{BM} for a related discussion.

Recently, there has been increasing interest in $p$-R\'{e}nyi entropy
inequalities. Interestingly, the following basic mathematical question
is still open: \emph{Over all random variables $X$ with $h_p(X)$ some
  fixed quantity, what are the minimizers of the entropy $h_p(X+X')$,
  where $X'$ is an independent copy of $X$?}  We learnt about this
question from the papers of Madiman, Melbourne, Xu, and Wang \cite{MW,
  MMX}, who studied unifying entropy power inequalities for the
R\'{e}nyi entropy, which, in the limit $p\to 1^+$ recover the
statement that, over all probability distributions with $h(X)$ fixed,
$h(X+X')$ is minimized if (and only if) $X$ is a Gaussian, see
e.g. \cite{DCT}.

Several closely related questions have been recently addressed
involving the $p$-R\'enyi entropy power $N_p(X) =
\exp(\tfrac{2}{d}h_p(X))$.  Bobkov and Chistyakov \cite{3} show that
there is a constant $c>0$, depending on $d$ and $p$, such that
$N_p(\sum_{j=1}^n X_j)\geq c\sum_{j=1}^n N_p(X_j)$ for independent
random vectors $X_1, \dots, X_n$.  A sharper form of the constant was
subsequently found by Ram and Sason \cite{8}.  Bobkov and Marsiglietti
\cite{1} proved that $N_p(X_1+X_2)^{\alpha}\geq N_p(X_1)^{\alpha}
+N_p(X_2)^{\alpha}$ for $X_1,X_2$ independent Random vectors if
$\alpha\geq \frac{p+1}{2}$. There has been considerable further recent
success extending the EPI to the R\'enyi setting \cite{2, 4, 5, 7, 8,
  10}.


Following \cite{LYZ, MW, MMX}, for $\beta>0$, consider the
\emph{Generalized Gaussian}
$$G_{\beta,p}(x) = \alpha (1-\beta|x|^2)_+^{1/(p-1)},
$$ where $\alpha$ is chosen so that $\int_{\R^d} G_{\beta,p}(x)
d\mu_d(x) = 1$. The generalized Gaussian is the distribution with the
smallest second moment with a given R\'{e}nyi entropy, see work of
Lutwak, Yang, and Zhang \cite{LYZ}, as well as earlier results of
Costa, Hero, and Vignat \cite{CHV}.  Madiman and Wang made the
following bold conjecture (Conjecture IV.3 in \cite{MW}).

\begin{conj}[The Madiman-Wang Conjecture] If $X_j$,
$j=1,\dots,n$, are independent random variables with densities $f_j$,
  and $Z_j$ are independent random variables distributed with respect
  to $G_{\beta_j,p}$ where $\beta_j$ is chosen so that $h_p(X_j) =
  h_p(Z_j)$, then $$h_p(X_1+\dots+X_n)\geq
  h_p(Z_1+\dots+Z_n).$$\end{conj}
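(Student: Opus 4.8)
The plan is to treat the conjecture as a constrained minimization problem and to identify the generalized Gaussians as its extremizers. I would first reduce to symmetric decreasing densities and to the case $n=2$. For the symmetrization, I would invoke the Riesz rearrangement inequality: replacing each $f_j$ by its symmetric decreasing rearrangement $f_j^*$ preserves $\int f_j^p$ (hence $h_p(X_j)$, since rearrangement is equimeasurable) and satisfies $\|f_1*\cdots*f_n\|_p\le\|f_1^**\cdots*f_n^*\|_p$ for $p>1$; since $h_p$ of the sum is a decreasing function of this norm, it suffices to minimize over radially symmetric decreasing profiles. The reduction in $n$ is more delicate: because $Z_1+Z_2$ is itself \emph{not} a generalized Gaussian, a naive induction is unavailable, so one must instead attack the full $n$-fold variational problem directly, or establish a superadditivity principle compatible with the generalized-Gaussian profile.

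Fixing $n=2$ and writing $g=f_1*f_2$, minimizing $h_p(X_1+X_2)=-\frac{1}{p-1}\log\int g^p\,d\mu_d$ subject to $\int f_i\,d\mu_d=1$ and $\int f_i^p\,d\mu_d$ fixed is, for $p>1$, equivalent to maximizing $\int g^p\,d\mu_d$. I would compute the first variation in $f_1$: perturbing $f_1\mapsto f_1+t\phi$ gives $\frac{d}{dt}\big|_{0}\int g^p = p\int g^{p-1}(\phi*f_2) = p\int \phi\,(g^{p-1}*\widetilde f_2)$, where $\widetilde f_2(z)=f_2(-z)$. Introducing Lagrange multipliers for the two constraints yields the Euler--Lagrange condition
$$(g^{p-1}*\widetilde f_2)(y)=a\,f_1(y)^{p-1}+b\qquad\text{on }\{f_1>0\},$$
with $\le$ in place of $=$ off the support, together with the symmetric condition obtained by varying $f_2$.

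The decisive step is to verify that $f_1=G_{\beta_1,p}$ and $f_2=G_{\beta_2,p}$ solve this system. Since both profiles are symmetric, $\widetilde f_2=f_2$, and because $G_{\beta_1,p}(y)^{p-1}=\alpha_1^{p-1}(1-\beta_1|y|^2)_+$, the right-hand side is an affine function of $|y|^2$ on the ball $\{|y|^2\le 1/\beta_1\}$. The Euler--Lagrange equation therefore holds precisely when the left-hand side $\big((G_{\beta_1,p}*G_{\beta_2,p})^{p-1}*G_{\beta_2,p}\big)(y)$ is \emph{also} affine in $|y|^2$ on that ball. I expect this to be the main obstacle. Unlike the true Gaussian, the convolution $G_{\beta_1,p}*G_{\beta_2,p}$ is not a generalized Gaussian, and raising it to the power $p-1$ and convolving once more produces a radial profile with no structural reason to collapse to an affine function of $|y|^2$; the entire conjecture stands or falls on whether this identity holds. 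I would reduce it to an explicit analysis of these radial convolutions, for instance via the Fourier--Bessel transform or a Taylor expansion of the profile in $|y|^2$ about the origin and comparison of coefficients against the affine target.

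Granting the first-order verification, I would then upgrade stationarity to genuine minimality by examining the second variation $\delta^2$ of $\int(f_1*f_2)^p$ restricted to the tangent space of the constraints, seeking sign-definiteness (or a uniqueness-of-critical-point argument) so that the generalized Gaussian is not merely stationary but globally optimal. As an independent route I would consider the sharp Young convolution inequality of Beckner and Brascamp--Lieb, which controls $\|f_1*f_2\|_r$ and is naturally tied to Rényi quantities; however, its extremizers are genuine Gaussians rather than the generalized Gaussians $G_{\beta,p}$, so this tool cannot recover the conjecture in the stated form unless one supplies an additional mechanism reconciling the two distinct families of extremizers.
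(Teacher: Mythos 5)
There is a fundamental problem with your proposal: the statement you were asked about is a \emph{conjecture}, and the paper's actual contribution is to \emph{refute} it, not to prove it. Your variational framework is essentially the paper's own (Sections 2--3): reduction to radially decreasing profiles via Riesz rearrangement, maximization of $\int (f_1 * \cdots * f_n)^p\,d\mu_d$ under the $L^1$ and $L^p$ constraints, and a first variation yielding exactly your Euler--Lagrange condition, which in the i.i.d.\ case reads $[\T(\conv_{n-1}(Q))]*[\conv_n(Q)]^{p-1} = \tfrac{\Lambda}{Mn}Q^{p-1} + \tfrac{\Lambda(n-1)}{n}$ on $\{Q>0\}$ (and $\T(\conv_{n-1}(Q)) = \conv_{n-1}(Q)$ for symmetric decreasing $Q$, as you note). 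The gap is precisely at the step you flagged as decisive and then deferred with ``granting the first-order verification'': that verification \emph{fails}. In the case $d=1$, $p=2$, $n=2$ with identically distributed summands, your stationarity condition becomes $G*G*G = aG + b$ on $[-1,1]$ with $a,b>0$, for $G(x)=\alpha(1-x^2)_+$. The paper disposes of this by an elementary derivative count: $G'' = 2\alpha(\delta_{-1} - \chi_{[-1,1]} + \delta_1)$, so $(G*G*G)^{(6)} = (G'')^{*3}$; the term $(-2\alpha\chi_{[-1,1]})^{*3}$ equals $-8\alpha^3(3-x^2)$ on $[-1,1]$ and no other term in the expansion is quadratic in $|x|$, hence $(G*G*G)^{(6)}(0)\neq 0$. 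But $aG+b$ is a quadratic polynomial near $0$, so its sixth derivative vanishes there. Thus the generalized Gaussian is not even a critical point of the constrained problem, no rescaling $\tfrac{c}{\lambda}G(\tfrac{\cdot}{\lambda})$ satisfies the Euler--Lagrange equation for any $\Lambda$, and the conjecture is false in this case.

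Consequently your planned second-variation analysis and the sharp Young's inequality route are moot: there is no stationarity to upgrade to minimality. It is worth noting that your own closing observation --- that the extremizers of sharp Young are genuine Gaussians rather than the profiles $G_{\beta,p}$, with no mechanism reconciling the two families --- was correctly pointing at the structural reason the conjecture fails; had you pushed the Taylor-expansion comparison of $\bigl((G*G)^{p-1}*G\bigr)$ against an affine function of $|y|^2$ that you yourself proposed, you would have arrived at the paper's counterexample rather than a proof.
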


This conjecture has been confirmed in the case $p=+\infty$, see \cite{6, 12}.

In this note we will show that unfortunately this conjecture does not
hold in the special case when $d=1$, $p=2$, $n=2$ and $X_1$ and $X_2$
are identically distributed, see Section \ref{MWsec}.  However, we do
suspect that a minimizing distribution is a relatively small
perturbation of the generalized Gaussian.

Throughout this note we only consider the case where $X_1,\dots ,X_n$ are independent
copies of a random variable $X$ with density $f$.  The question of
finding the minimizer of $h_p(X_1+\dots+ X_n)$ with $h_p(X)$ fixed can
then be rephrased as a constrained maximization problem, which we introduce in Section \ref{constrained}.  Subsequently, in Section \ref{variation} we take the first variation of this maximization problem.  We have not been able to develop a satisfactory theory of the associated Euler-Lagrange equation (\ref{neccond}), but we show in Section \ref{MWsec} that the generalized Gaussian is not a solution to (\ref{neccond}), and so fails to be a maximizer of the extremal problem.  We conclude the paper with some elementary remarks and speculation.


\medskip
\medskip
\textbf{Acknowledgement.} The first named author is supported by NSF DMS-1830128, DMS-1800015 and NSF CAREER DMS-1847301. The second named author is supported by the NSF CAREER DMS-1753260. The third named author is supported by the NSF DMS-1812240. The fourth named author is supported by the NSF DMS-1612936. The work was partially supported by the National Science Foundation under Grant No. DMS-1440140 while the authors were in residence at the Mathematical Sciences Research Institute in Berkeley, California, during the Fall 2017 semester.

The authors are especially grateful to the reviewers for valuable comments and suggestions, which helped improve the paper and clarify the exposition.

\section{The constrained maximization problem}\label{constrained}

Denote by $\conv_n(f)$ the $(n-1)$-fold convolution of a given
function $f$ with itself, that is, $\conv_n(f) = f*f*\cdots *f$, where
there are $n$ factors of $f$ (and $n-1$ convolutions).  Then
$\conv_1(f)=f$.  It will be convenient to set $\conv_0(f)=
\delta_0$, the Dirac delta measure, so that $g*\conv_0(f)=g$ for any measurable function $g$.

Throughout the text, we fix $M>0$, $n\in \mathbb{N}$ and $p\in
(1,\infty)$. We set
$$\F = \bigl\{f \in L^1(\R^d)\cap L^p(\R^d), \, f\geq 0,\,
\|f\|_p^p=M,\, \|f\|_1=1\bigl\}
$$ and consider the extremal problem
\begin{equation}\label{convprob}\begin{cases}\;\text{Maximize } \I(f)\stackrel{\operatorname{def}}{=}\int_{\R^d}[\conv_n(f)(x)]^pd\mu_d(x)\\\text{ subject to } f\in \F.
\end{cases} \end{equation} Put \begin{equation}\label{lambdadef}\Lambda = \Lambda(p,M) =
  \sup\{\I(f): f\in \F\}.\end{equation}

We begin with a simple scaling lemma, which we will use often in what follows.
\begin{lem}\label{scaling}  Suppose that $f\in L^1(\mathbb{R}^d)\cap L^p(\mathbb{R}^d)$ is non-negative, and $\|f\|_1>0$.  The function
$$\wt f = \frac{1}{\lambda^d
    \|f\|_1}f\Bigl(\frac{\cdot}{\lambda}\Bigl), \text{ with }\lambda =
  \Bigl(\frac{\|f\|_p^p}{M\|f\|_1^p}\Bigl)^{\tfrac{1}{d(p-1)}},$$
  belongs to $\F$, and
$$\I(\wt f) = \frac{M}{\|f\|_p^p}\frac{1}{\|f\|_1^{p(n-1)}}\I(f).
$$
\end{lem}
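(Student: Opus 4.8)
The plan is to reduce everything to a single scaling identity for iterated convolutions and then track the powers of $\lambda$. First I would record how $\conv_n$ behaves under dilation. Writing $g(x) = f(x/\lambda)$, a change of variables in the defining integral gives $\conv_n(g)(x) = \lambda^{d(n-1)}\conv_n(f)(x/\lambda)$; this is immediate for $n=1$, and it propagates by induction using $\conv_{n+1}(g) = g*\conv_n(g)$ together with the substitution $y\mapsto \lambda y$. Since $\wt f = c\,g$ with $c = (\lambda^d\|f\|_1)^{-1}$, and $\conv_n$ is homogeneous of degree $n$ in its argument, this yields $\conv_n(\wt f)(x) = c^n\lambda^{d(n-1)}\conv_n(f)(x/\lambda)$, which is the only structural input needed below.

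Next I would verify that $\wt f\in\F$. Nonnegativity is clear since $f\geq 0$ and $\lambda,\|f\|_1>0$. The change of variables $y = x/\lambda$ gives $\|\wt f\|_1 = \|f\|_1^{-1}\int_{\R^d} f\,d\mu_d = 1$, and the same substitution gives $\|\wt f\|_p^p = \lambda^{-d(p-1)}\|f\|_1^{-p}\|f\|_p^p$. Plugging in the definition $\lambda^{d(p-1)} = \|f\|_p^p/(M\|f\|_1^p)$ collapses this to exactly $M$, so $\wt f$ meets both constraints defining $\F$.

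Finally I would compute $\I(\wt f)$. Inserting the dilation identity and substituting $y = x/\lambda$ in $\int_{\R^d}[\conv_n(\wt f)]^p\,d\mu_d$ produces $\I(\wt f) = c^{np}\lambda^{dp(n-1)+d}\,\I(f)$. Using $c^{np} = \lambda^{-dnp}\|f\|_1^{-np}$, the total exponent of $\lambda$ is $dp(n-1)+d-dnp$, which simplifies to $-d(p-1)$, so the prefactor is $\lambda^{-d(p-1)}\|f\|_1^{-np}$. Substituting $\lambda^{-d(p-1)} = M\|f\|_1^p/\|f\|_p^p$ one last time, and noting $\|f\|_1^{p-np} = \|f\|_1^{-p(n-1)}$, gives precisely the stated formula.

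The only care required is the exponent bookkeeping: the cancellation $dp(n-1)+d-dnp = -d(p-1)$ is what makes the $\lambda$-dependence of the prefactor collapse, and it is exactly the same exponent that governs the $L^p$-norm constraint, which is why substituting the definition of $\lambda$ closes the computation. There is no genuine obstacle here — the lemma is a direct consequence of the homogeneity of $\conv_n$ and two changes of variables — so I would present the convolution scaling identity as a short standalone step in order to keep the two constraint checks and the evaluation of $\I(\wt f)$ transparent.
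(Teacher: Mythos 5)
Your proposal is correct and follows essentially the same route as the paper: the same change-of-variables computation of $\|\wt f\|_1$ and $\|\wt f\|_p^p$, the same scaling identity $\conv_n(\wt f)(x) = \frac{1}{\|f\|_1^n\lambda^d}\conv_n(f)(x/\lambda)$ (which you obtain by separating the dilation from the scalar factor, while the paper states it directly), and the same final substitution of the definition of $\lambda$. The exponent bookkeeping checks out.
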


\begin{proof}
Observe that, for any $r\in [1,\infty)$,
$$\|\wt f\|_r^r = \frac{1}{\lambda^{d(r-1)}\|f\|_1^r}\|f\|_r^r.
$$
Plugging in $r=1$ and $r=p$ (and recalling the definition of $\lambda$) we see that $\wt f\in \F$.  Next, observe that
$$\conv_n(\wt f)(x) = \frac{1}{\|f\|_1^n\lambda^d}\conv_n(f)\bigl(\frac{x}{\lambda}\bigl)\text{ for any }x\in \R^d.
$$
Whence, $$\I(\wt{f}) = \frac{1}{\lambda^{d(p-1)}\|f\|_1^{pn}}\I(f),$$
and the proof is complete by recalling the definition of $\lambda$.
\end{proof}

We next prove that (\ref{convprob}) has a maximizer.  A radial function
$f$ on $\R^d$ is called decreasing if $f(y)\leq f(x)$ whenever
$|y|\geq |x|$.

\begin{prop}\label{existence}
The problem (\ref{convprob}) has a lower-semicontinuous, radially
decreasing, maximizer $Q$.
\end{prop}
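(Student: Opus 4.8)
The plan is to use the direct method of the calculus of variations together with symmetric decreasing rearrangement. First I would record that the supremum is finite: by Young's convolution inequality $\|\conv_n(f)\|_p\le \|f\|_1^{n-1}\|f\|_p = M^{1/p}$ for every $f\in\F$, so $\Lambda\le M<\infty$, and clearly $\Lambda>0$ since $\I(f)>0$ for any $f\in\F$. (If $n=1$ then $\I\equiv M$ on $\F$ and the statement is trivial, so assume $n\ge 2$.) Next I would argue that one may restrict to radially decreasing competitors. Writing $\|\conv_n(f)\|_p=\sup\{\int_{\R^d}\conv_n(f)\,v\,d\mu_d: v\ge 0,\ \|v\|_{p'}\le 1\}$ and unfolding the convolution, the pairing $\int \conv_n(f)\,v$ equals the $(n{+}1)$-fold integral $\int f(y_1)\cdots f(y_n)\,v(y_1+\cdots+y_n)\,dy$; by the Brascamp--Lieb--Luttinger rearrangement inequality this does not decrease when $f$ and $v$ are replaced by their symmetric decreasing rearrangements $f^\ast$ and $v^\ast$. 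Since $\|v^\ast\|_{p'}=\|v\|_{p'}$, taking the supremum over $v$ gives $\I(f)\le\I(f^\ast)$, while rearrangement preserves $\|f\|_1$ and $\|f\|_p$. Hence a maximizing sequence may be taken radially decreasing.

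For compactness, let $f_k\in\F$ be radially decreasing with $\I(f_k)\to\Lambda$, and write $f_k(x)=g_k(|x|)$ with $g_k$ non-increasing. The two constraints give, uniformly in $k$, the pointwise envelope $f_k(y)\le\phi(y):=\min\!\big(\tfrac{1}{\kappa_d|y|^d},(\tfrac{M}{\kappa_d|y|^d})^{1/p}\big)$ (with $\kappa_d$ the volume of the unit ball), because $f_k\ge f_k(y)$ on the ball $B(0,|y|)$. In particular the $g_k$ are uniformly bounded on compact subsets of $(0,\infty)$, so by Helly's selection theorem and a diagonal argument I can pass to a subsequence with $g_k\to g$ pointwise on $(0,\infty)$ for some non-increasing $g$. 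Taking the lower-semicontinuous representative $Q(x)=g(|x|)$ yields $f_k\to Q$ a.e., with $Q$ radially decreasing and, by Fatou, $\|Q\|_1\le 1$ and $\|Q\|_p^p\le M$.

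To pass the functional to the limit I would use the envelope. The key point is that $\conv_n(\phi)\in L^p(\R^d)$: near the origin $\phi\sim|y|^{-d/p}\in L^1_{\mathrm{loc}}$, and $\conv_n(\phi)$ is there controlled by an iterated Riesz potential that lies in $L^p_{\mathrm{loc}}$ precisely because $p>1$; near infinity $\phi\sim|y|^{-d}$ gives $\conv_n(\phi)(x)\lesssim |x|^{-d}(\log|x|)^{n-1}$, which is $p$-integrable again because $p>1$. Granting this, $\conv_n(\phi)$ is finite a.e., so dominated convergence in the $(n{-}1)$-fold integral defining $\conv_n(f_k)(x)$ gives $\conv_n(f_k)\to\conv_n(Q)$ a.e.; and since $(\conv_n f_k)^p\le(\conv_n\phi)^p\in L^1$, a second application of dominated convergence yields $\I(f_k)\to\I(Q)$, so $\I(Q)=\Lambda$.

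It remains to see that the constraints are saturated, which I would extract from the scaling Lemma \ref{scaling}. If either $\|Q\|_1<1$ or $\|Q\|_p^p<M$, then the rescaled function $\wt Q\in\F$ produced by Lemma \ref{scaling} satisfies $\I(\wt Q)=\frac{M}{\|Q\|_p^p}\,\|Q\|_1^{-p(n-1)}\,\I(Q)>\I(Q)=\Lambda$ (using $n\ge2$ and $Q\ne 0$, the latter since $\I(Q)=\Lambda>0$), contradicting $\I(\wt Q)\le\Lambda$. Hence $\|Q\|_1=1$ and $\|Q\|_p^p=M$, so $Q\in\F$ and $\I(Q)=\Lambda$, i.e.\ $Q$ is the desired lower-semicontinuous, radially decreasing maximizer. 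The main obstacle is the third step: one must pass $\I$ to the limit in the \emph{upper}-semicontinuous direction, whereas Fatou's lemma alone controls only the liminf the wrong way. This is exactly what the bound $\conv_n(\phi)\in L^p$ is designed to supply, and verifying it is the one genuinely quantitative estimate in the argument.
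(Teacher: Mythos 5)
Your proposal is correct in outline but follows a genuinely different route from the paper in its two middle steps. The rearrangement reduction is essentially the same (you invoke Brascamp--Lieb--Luttinger where the paper iterates Riesz's rearrangement inequality), and the final saturation-of-constraints argument via Lemma \ref{scaling} is identical. The differences: for compactness, you use the pointwise envelope $f_k\leq\phi$ coming from the two constraints together with Helly's selection theorem to get a.e.\ convergence directly, whereas the paper extracts a weak $L^p$ limit, identifies it as radially decreasing via Mazur's lemma, and then proves a.e.\ convergence by averaging over small intervals (Claim \ref{claim:pointwise}); these are interchangeable and yours is arguably more economical. For passing $\I$ to the limit in the upper-semicontinuous direction --- which you correctly identify as the crux --- you propose dominated convergence with the explicit majorant $\conv_n(\phi)$, while the paper instead proves strong convergence $f_j\to f$ in $L^q$ for $1<q<p$ by a Vitali-type argument (Claim \ref{claim:lq}) and then uses the $n$-fold Young inequality (\ref{nyoung}) to show $g\mapsto\I(g)^{1/p}$ is Lipschitz on bounded sets of $L^{(np')'}$.

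The one place where your argument still owes a proof is the claim $\conv_n(\phi)\in L^p(\R^d)$. The claim is true, but it is not a one-liner: $\phi$ lies in neither $L^1$ (tail $|y|^{-d}$) nor $L^p$ (local singularity $|y|^{-d/p}$), so one must split $\phi$ into its local and tail parts, run the Riesz-potential composition on the local pieces (where the relevant exponent arithmetic uses $n\geq 2$, not only $p>1$: the local singularity of $\conv_n(\phi)$ is $|x|^{-d(1-n/p')}$ and its $p$-integrability reduces to $n>1$), handle the borderline and supercritical cases $n\geq p'$ separately, establish the decay $|x|^{-d}(\log|x|)^{n-1}$ at infinity by induction, and control the cross terms. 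None of this is hard, but it is exactly the ``one genuinely quantitative estimate'' you flag, and it is the estimate the paper's Young-inequality route is designed to avoid. If you carry out that verification, your proof is complete and self-contained.
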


\begin{proof} First observe that for any measurable function $f$, iterating Riesz's rearrangement inequality \cite[Theorem
  3.7]{LL} yields $\I(f)\leq \I(f^*)$, where $f^*$ is the
  symmetric rearrangement of $f$; see  \cite[Section 3.4]{B} for
  related multiple convolution rearrangement inequalities and their
  equality cases.  Also, notice that if $f\in \F$, then $f^*\in \F$.

Take non-negative functions $f_j\in \F$ such that $\Lambda
=\lim_{j\to\infty}\I(f_j)$ (recall $\Lambda$ from (\ref{lambdadef})).
By replacing $f_j$ with its symmetric rearrangement, we may assume
that $f_j$ are radial and decreasing.  Passing to a subsequence if
necessary, we may in addition assume that $f_j\to f$ weakly in
$L^p(\R^d)$.  Consequently, $f$ is radial, decreasing, $f\geq 0$, and
$\|f\|_p^p\leq M$.  (To see this, observe that the set of radial
decreasing nonnegative functions with norm at most $M^{1/p}$ is a
closed convex set in $L^p(\R^d)$, so by Mazur's Lemma, see
e.g. \cite[Theorem 2.13]{LL}, this set is weakly closed.)  By
modifying $f$ on a set of measure zero if necessary, we may assume
that $f$ is lower semi-continuous\footnote{If $f$ is discontinuous at
  $x\in \R^d$, then define $f(x) = \sup_{|y|>|x|}f(y)$ (i.e. the
  one-sided radial limit from the right). Then $\{f>\lambda\}$ is open
  for every $\lambda>0$.}.





%

\begin{cla}\label{claim:pointwise} As $j\to \infty$, $f_j\to f$ $\mu_d$-almost everywhere.\end{cla}

\begin{proof} For $r>0$, define $v_j(r) = f_j(x)$ and $v(r)=f(x)$ whenever $|x|=r$.  Then since
$f_j$ converges weakly to $f$ in $L^p(\R^d)$, we have that whenever
 $I$ is a closed interval of finite Lebesgue measure in $(0,\infty)$,
$$\lim_{j\to \infty}\int_I v_j(s) d\mu_1(s) = \int_I v(s) d\mu_1(s).
$$
Insofar as the function $v$ is non-decreasing, it has at most countably many points of discontinuity.
If $r>0$ is a point of continuity of $v$, and $I_{k} = [r-2^{-k}, r]$,
then
$$v(r) = \lim_{k\to\infty}\frac{1}{2^{-k}}\int_{I_{k}}v(s) d\mu_1(s) =
\lim_{k\to\infty}\lim_{j\to\infty}\frac{1}{2^{-k}}\int_{I_{k}}v_j(s) d\mu_1(s)
$$ but since $v_j$ is decreasing we have that $v_j(s)\geq v_j(r)$ for
$s\in I_{k}$.  Thus
$$v(r) \geq \limsup_{j\to\infty}v_j(r).
$$
Arguing similarly with intervals whose left end-point is $r$, we also have that
$$v(r) \leq \liminf_{j\to\infty}v_j(r).
$$ Thus $\lim_{j\to \infty}v_j = v$ at every point of continuity of $v$.  If $E$ is a countable set in $(0,\infty)$, then $E\times \mathbb{S}^{d-1}$ is a Lebesgue null set in $\R^d$, so the claim follows.
\end{proof}

Notice that, as a consequence of this claim, Fatou's Lemma ensures that $\|f\|_1\leq 1$.  Our next claim is

\begin{cla}\label{claim:lq}  If $1<q<p$, then $f_j \to f$ strongly in $L^q(\R^d)$ as $j\to \infty$.
\end{cla}

The proof of this claim is a variant of the Vitali convergence theorem (see e.g. Theorem 9.1.6 of \cite{Ros}), but observe that it does not necessarily hold if one was to remove the radially decreasing property of the functions $f_j$ (just consider a sequence of translates of a fixed function).

\begin{proof}  Fix $\eps>0, \delta>0$.  Insofar as the functions $f_j$ and $f$ are radially decreasing,
$$\bigcup_j\{|f_j|\geq \tfrac{\delta}{2}\}\cup\{|f|\geq\tfrac{\delta}{2}\}\subset B,
$$ where $B$ is the closed ball centered at $0$ of radius
  $\bigl(\frac{2}{\mu_d(B(0,1))\delta}\bigl)^{1/d}.$ (Otherwise we
  would have $\|f_j\|_1>1$ for some $j$, or $\|f\|_1>1$.)

On $\R^d\backslash B$, we have $|f_j|<\delta/2$ for every $j$, and $|f|<\delta/2$, whence
$$\int_{\R^d\backslash B}|f_j(x)-f(x)|^qd\mu_d(x)\leq \delta^{q-1}\Bigl(\|f_j\|_1+\|f\|_1\Bigl)\leq 2\delta^{q-1}<\frac{\eps}{3}
$$
provided $\delta>0$ is chosen sufficiently small.

Now fix $\kap>0$.  Observe that,
$$\int_{B\cap\{|f_j-f|<\kap\}}|f_j(x)-f(x)|^qd\mu_d(x) \leq \mu_d(B)\kap^q<\frac{\eps}{3}
$$
if $\kap$ is chosen sufficiently small.  On the other hand, since $B$ has finite measure, one can invoke continuity of measure from above, thus we have that $f_j\to f$ in measure on $B$ as $j\to \infty$.  From the inequalities
\begin{equation}\begin{split}\nonumber\int_{B\cap\{|f_j-f|\geq \kap\}}|f_j(x)-f(x)|^qd\mu_d(x)
&\leq \mu_d(B\cap \{|f_j-f|\geq \kap\})^{1-q/p}
\|f_j-f\|_p^q\\
&\leq 2^qM^{q/p} \mu_d(B\cap \{|f_j-f|\geq \kap\})^{1-q/p},
\end{split}\end{equation}
we infer that there exists $N\in \mathbb{N}$ such that
$$\int_{B\cap\{|f_j-f|\geq \kap\}}|f_j(x)-f(x)|^qd\mu_d(x)<\frac{\eps}{3} \text{ for all }j\geq N.
$$
Bringing these estimates together, it follows that $\|f_j-f\|_q^q<\eps$ for every $j\geq N$.
\end{proof}

Our next goal is to use this claim in order to show that $\I(f)=\Lambda$.  To this end, observe that repeated application of Young's convolution inequality \cite{LL}
  yields that, for any $n$-tuple of functions $g_1,\dots, g_n$,
\begin{equation}\label{nyoung}\Bigl(\int_{\R^d}|g_1*g_2*\cdots *g_n(x)|^p d\mu_d(x) \Bigl)^{1/p}\leq \prod_{j=1}^n\|g_j\|_{(np')'},
\end{equation}
where $p' = p/(p-1)$ is the H\"{o}lder conjugate of $p$, so $(np')' = \tfrac{np}{np-p+1}$.  Since $n>1$, $(np')'\in (1,p)$.

To apply this inequality, first use Minkowski's inequality to observe that, $$|\I(g_1)^{1/p}-\I(g_2)^{1/p}|\leq \Bigl(\int_{\R^d} |\conv_n(g_1)(x) - \conv_n(g_2)(x)|^p d\mu_d(x)\Bigl)^{1/p},$$ but,
$$\conv_n(g_1)-\conv_n(g_2) = \sum_{k=0}^{n-1}\conv_k(g_1)*(g_1-g_2)*\conv_{n-k-1}(g_2),
$$
and hence
$$|\I(g_1)^{1/p}-\I(g_2)^{1/p}|\leq\sum_{k=0}^{n-1}\Bigl(\int_{\R^d} |\conv_k(g_1)*(g_1-g_2)*\conv_{n-k-1}(g_2)(x)|^pd\mu_d(x)\Bigl)^{1/p}.
$$
Appealing to (\ref{nyoung}) now yields,
$$|\I(g_1)^{1/p}-\I(g_2)^{1/p}|\leq\sum_{k=0}^{n-1}\|g_1\|_{(np')'}^k\|g_2\|_{(np')'}^{n-k-1}\|g_1-g_2\|_{(np')'}.
$$ Returning to our sequence $f_j$, it is a consequence of
H\"{o}lder's inequality that $\|f_j\|_{(np')'}\leq
\|f_j\|_1^{\theta}\|f_j\|_p^{1-\theta}$ with some $\theta\in (0,1)$
depending on $n$ and $p$, so $\|f_j\|_{(np')'}\leq C(M,n,p)$ (and the
same inequality holds with $f_j$ replaced by $f$).  Whence there is a
constant $C'(n,p,M)$ such that
$$|\I(f_j)^{1/p}-\I(f)^{1/p}| \leq C'(n,p,M)\|f_j-f\|_{(np')'} \,\text{ for every $j$}.
$$

Since $(np')'\in (1,p)$, Claim \ref{claim:lq} yields that
$f_j \to f$ in $L^{(np')'}$ as $j\to \infty$. Hence $\I(f)=\lim_{j\to \infty}\I(f_j)=\Lambda$. (It follows that $f$
is not identically zero.)

It remains to show that $f\in \F$.  To this
end, we apply Lemma \ref{scaling}: Consider the function
$$\wt{f} = \frac{1}{\|f\|_1\lambda^d}f\Bigl(\frac{\cdot}{\lambda}\Bigl), \text{ with }\lambda = \Bigl(\frac{\|f\|_p^p}{M\|f\|_1^p}\Bigl)^{\frac{1}{d(p-1)}}.$$  Then $\wt f\in \F$ and $\I(\wt f ) = \frac{M}{\|f\|_p^p}\frac{1}{\|f\|_1^{p(n-1)}}\Lambda.$
Consequently, if $\|f\|_p^p<M$ or $\|f\|_1<1$, then $\I(\wt f)>\Lambda$, which is absurd.  Thus $f\in \F$ and the proof of the proposition is complete.
\end{proof}

\section{The First Variation}\label{variation}

With the existence of a maximizer proved, we now wish to analyze it
analytically.

To introduce the Euler-Lagrange equation associated to (\ref{convprob}) it will be convenient to define, for a function $f$, $\T(f)(x) = f(-x)$.  Observe that, if $f,g,h$ are non-negative measurable functions,
\begin{equation}\label{threeconv}\int_{\R^d} f(x)(\T(g)*h)(x)d\mu_d(x) = \int_{\R^d}(f*g)(x)h(x) d\mu_d(x).
\end{equation}

\begin{prop}  A lower-semicontinuous function $Q\in \F$ is a maximizer of the problem (\ref{convprob}) if and only if
\begin{equation}\label{neccond}
[\T(\conv_{n-1}(Q))]*[\conv_n(Q)]^{p-1} = \frac{\Lambda}{M n}
Q^{p-1}+\frac{\Lambda (n-1)}{n} \text{ on }\{Q>0\}.
\end{equation}
\end{prop}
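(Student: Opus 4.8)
The plan is to combine a direct first-variation computation with the scaling structure from Lemma~\ref{scaling}, which lets me read off the correct multipliers with no separate Lagrange-multiplier bookkeeping. First I would fix a bounded $\phi$ with compact support contained in $\{Q>0\}$ (so that $Q+t\phi\geq 0$ and stays bounded below on $\supp\phi$ for all small $|t|$). Since convolution is bilinear, commutative and associative, the binomial expansion gives $\conv_n(Q+t\phi)=\sum_{k=0}^n\binom{n}{k}t^k\,\conv_k(\phi)*\conv_{n-k}(Q)$, whence $\frac{d}{dt}\conv_n(Q+t\phi)\big|_{t=0}=n\,\phi*\conv_{n-1}(Q)$. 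Differentiating $\I$ under the integral sign then yields
\[
\frac{d}{dt}\I(Q+t\phi)\Big|_{t=0}=pn\int_{\R^d}[\conv_n(Q)]^{p-1}\,\bigl(\phi*\conv_{n-1}(Q)\bigr)\,d\mu_d=pn\int_{\R^d}\phi\cdot K\,d\mu_d,
\]
where $K:=\T(\conv_{n-1}(Q))*[\conv_n(Q)]^{p-1}$ and the last equality is precisely the adjoint identity \eqref{threeconv} applied with $f=\phi$, $g=\conv_{n-1}(Q)$, $h=[\conv_n(Q)]^{p-1}$.

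For necessity, rather than track two multipliers I would exploit that, by Lemma~\ref{scaling}, the scale-invariant functional $\Psi(f):=\dfrac{M\,\I(f)}{\|f\|_p^p\,\|f\|_1^{p(n-1)}}$ is invariant under both $f\mapsto cf$ and $f\mapsto f(\cdot/\lambda)$ and satisfies $\Psi(f)\leq\Lambda$ for every admissible $f$, with $\Psi(Q)=\I(Q)=\Lambda$. Hence $Q$ is an \emph{unconstrained} global maximizer of $\Psi$, so $\frac{d}{dt}\Psi(Q+t\phi)|_{t=0}=0$ for every $\phi$ as above. Taking the logarithmic derivative of $\Psi$ and inserting $\I(Q)=\Lambda$, $\|Q\|_p^p=M$, $\|Q\|_1=1$ turns this into
\[
\frac{n}{\Lambda}\int_{\R^d}\phi\,K\,d\mu_d-\frac{1}{M}\int_{\R^d}Q^{p-1}\phi\,d\mu_d-(n-1)\int_{\R^d}\phi\,d\mu_d=0.
\]
As $\phi$ ranges over all such two-sided test functions, the fundamental lemma of the calculus of variations forces the bracketed density to vanish a.e.\ on $\{Q>0\}$, which is exactly \eqref{neccond}.

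For sufficiency I would run this backwards and use the specificity of the constant $\Lambda$. Assuming $Q\in\F$ satisfies \eqref{neccond}, multiply the identity by $Q$ and integrate over $\R^d$ (permissible since $Q$ vanishes off $\{Q>0\}$). On the left, \eqref{threeconv} with $f=Q$, $g=\conv_{n-1}(Q)$, $h=[\conv_n(Q)]^{p-1}$ gives $\int Q\,K=\int \conv_n(Q)\,[\conv_n(Q)]^{p-1}=\I(Q)$; on the right, the constraints give $\frac{\Lambda}{Mn}M+\frac{\Lambda(n-1)}{n}=\Lambda$. Hence $\I(Q)=\Lambda$, so $Q$ attains the supremum and is a maximizer.

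The main obstacle I anticipate is analytic rather than algebraic: justifying that the first variations of $\I$, $\|\cdot\|_p^p$ and $\|\cdot\|_1$ exist and may be differentiated under the integral, which requires an integrable majorant for $[\conv_n(Q+t\phi)]^p$ uniformly in small $t$ together with finiteness of $K$. Both should follow from the repeated Young inequality \eqref{nyoung} and the $L^{(np')'}$-bounds already used in Proposition~\ref{existence}, but some care is needed because $\{Q>0\}$ may have infinite measure, so $\phi$ must be confined to a region where $Q$ is bounded below. A secondary point is upgrading the integrated identity to the pointwise equation on $\{Q>0\}$; here the lower semicontinuity of $Q$ and the continuity of the convolution kernels are what let one pass from the a.e.\ statement to \eqref{neccond} on all of $\{Q>0\}$.
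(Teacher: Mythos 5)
Your proposal is correct and follows essentially the same route as the paper: your scale-invariant functional $\Psi(Q+t\varphi)$ is literally the paper's $\I(\wt Q_t)$ obtained by rescaling the perturbation back into $\F$ via Lemma \ref{scaling}, and the resulting first-order expansion, the use of \eqref{threeconv} to produce $\T(\conv_{n-1}(Q))*[\conv_n(Q)]^{p-1}$, and the sufficiency argument by integrating \eqref{neccond} against $Q$ all coincide with the paper's proof. The analytic point you flag (an integrable majorant uniform in small $t$) is handled in the paper exactly as you anticipate, via the bound $|\varphi|\leq CQ$ coming from lower semicontinuity and compact support of $\varphi$ in $\{Q>0\}$, which yields the pointwise second-order Taylor estimate $\bigl|\tfrac{d^2}{dt^2}\conv_n(Q+t\varphi)^p\bigr|\leq C\conv_n(Q)^p$.
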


\begin{rem}\label{raddecrem}Observe that if $Q$ is radially decreasing, then $\conv_{n-1}(Q)$ is again radially decreasing for any $n\in \mathbb{N}$, so $\T(\conv_{n-1}(Q)) = \conv_{n-1}(Q)$ in this case.\end{rem}

\begin{proof} The sufficiency is easy to show.  Integrating both sides of (\ref{neccond}) against $Q$, and recalling that $Q\in \mathcal{F}$, yields
$$\int_{\R^d} Q(x)\cdot ([\T(\conv_{n-1}(Q))]*[\conv_n(Q)](x))^{p-1}d\mu_d(x) =\Lambda.
$$ But using Tonelli's theorem and (\ref{threeconv}), the left hand side is equal to
  $\int_{\R^d}(\conv_n(Q)(x))^pd\mu_d(x) = \I(Q)$.  

Conversely, consider a bounded function $\varphi$ compactly supported
in the open set $\{Q>0\}$.  Since $Q$ is lower-semicontinuous,
$\inf_{\supp(\varphi)} Q>0$.  Therefore, (insofar as $\varphi$ is bounded) there exists a constant $C>0$ such that
\begin{equation}\label{phismall}
|\varphi|\leq C Q \text{ on }\R^d,
\end{equation}
 so in particular, there exists $t_0>0$ such that for $|t|\leq t_0$ it
 follows that $Q_t\stackrel{\operatorname{def}}{=}Q+t\varphi$ is
 non-negative.  In the notation of Lemma \ref{scaling} with $f=Q_t$,
 we consider the function
$${\wt Q_t} = \frac{1}{\lambda^d}\frac{(Q + t
   \varphi)\bigl(\frac{\cdot}{\lambda}\bigl)}{\|Q+t\varphi\|_1},$$
 with the corresponding $\lambda>0$ satisfying $\|\wt Q_t\|^p_{p} =
 \|Q\|_{p}^p=M$.
 Of course we also have $\int_{\R}{\wt Q_t}(x)\,d\mu_d(x)=1$ regardless of
 $\lambda$ for $|t|<t_0$.  We conclude that ${\wt Q_t}$ belongs to
 $\mathcal{F}$, and therefore \begin{equation}\label{Qtsmaller}\I(\wt Q_t)\leq \I(Q)=\Lambda,\text{ for
 all }|t|<t_0.\end{equation} Moreover, as in Lemma \ref{scaling},
\begin{equation}\label{Qtcalc}
\I(\wt Q_t) = \frac{1}{\lambda^{d(p-1)}\|Q+t\varphi\|_1^{np}}\int_{\R^d}[\conv_n(Q+t\varphi)(x)]^pd\mu_d(x).
\end{equation}


For $|t|<t_0$, we calculate, using commutativity and associativity of the convolution operator,
$$\frac{d}{dt} \conv_n(Q+t\varphi)^p = pn[\varphi*\conv_{n-1}(Q+t\varphi)][\conv_n(Q+t\varphi)]^{p-1},
$$
and
\begin{equation}\begin{split}\label{2ndder}
\frac{d^2}{dt^2}\conv_n(Q+&t\varphi)^p =  pn(n-1)    \varphi*\varphi*\conv_{n-2}(Q+t\varphi)[\conv_n(Q+t\varphi)]^{p-1}\\&+n^2p(p-1)[\varphi*\conv_{n-1}(Q+t\varphi)]^2[\conv_n(Q+t\varphi)]^{p-2}.
\end{split}
\end{equation}
Crudely employing the bound (\ref{phismall}) in (\ref{2ndder}), we infer that there is a constant $C>0$, depending on $n$, $p$ and $t_0,$ such that for all $|t|<t_0$,
$$\Bigl|\frac{d^2}{dt^2}\conv_n(Q+t\varphi)^p\Bigl|\leq C\conv_n(Q)^p.
$$
Whence, the second order Taylor formula yields that
\begin{equation}\begin{split}\label{pointwiseperturb}|\conv_n(Q+&t\varphi)^p - \conv_n(Q)^p - npt [\varphi*\conv_{n-1}(Q)][\conv_n(Q)]^{p-1}| \leq Ct^2\conv_n(Q)^p,\end{split}\end{equation}
for $|t|< t_0$.  Integrating the pointwise inequality (\ref{pointwiseperturb}) yields
\begin{equation}\begin{split}\label{b4lambdaperturb}
\int_{\R^d}\!&[\conv_n(Q+t\varphi)(x)]^pd\mu_d(x)\\&= \!\Lambda\!+\!\!npt\int_{\R^d}[\varphi*\conv_{n-1}(Q)(x)][\conv_n(Q)(x)]^{p-1} d\mu_d(x)+
      O(t^2)
\end{split}\end{equation}
as $t\to 0$.

Now, recalling the definition of $\lambda$, we calculate
\begin{equation}\begin{split}\label{scaleexpansion}&\lambda^{d(p-1)}\|Q+t\varphi\|_1^{np} = \frac{\|Q+t\varphi\|_p^p}{M}\|Q+t\varphi\|_1^{(n-1)p}\\
& = \Bigl(1+\frac{pt}{M}\int_{\R^d}\varphi(x) Q(x)^{p-1}
    d\mu_d(x)+O(t^2)\Bigl)\\
    &\;\;\;\cdot\Bigl(1+t(n-1)p\int \varphi(x)
    d\mu_d(x)+O(t^2)\Bigl),\end{split}\end{equation}
where in the expansion of $\|Q+t\varphi\|_p^p$ we have again used the inequality (\ref{phismall}) to obtain the $O(t^2)$ term.

Plugging the two expansions (\ref{scaleexpansion}) and (\ref{b4lambdaperturb}) into (\ref{Qtcalc}) yields that, as $t\to 0$,
\begin{equation}\begin{split}\nonumber\I(\wt Q_t) = &\Lambda + pt\Bigl\{n\int_{\R^d}[\varphi*\conv_{n-1}(Q)(x)][\conv_n(Q)(x)]^{p-1} d\mu_d(x)\\
& - \frac{\Lambda}{M}\int_{\R^d}\varphi(x) Q^{p-1}(x)d\mu_d(x) - (n-1)\Lambda\int_{\R^d}
    \varphi(x) \,d\mu_d(x)\Bigl\} +O(t^2).
\end{split}\end{equation}

From (\ref{Qtsmaller}) it follows that $\lim_{t\to 0}\frac{\I(\wt Q_t)-\I(Q)}{t}=0$, so the second term in the prior expansion must vanish, that is,
$$\int_{\R^d}\varphi(x) \Bigl\{n[\T(\conv_{n-1}(Q))]*[\conv_n(Q)]^{p-1}(x) -
\frac{\Lambda}{M} Q^{p-1} - (n-1)\Lambda\Bigl\}d\mu_d(x)=0,
$$
where (\ref{threeconv}) has been used.  Since $\varphi$ was any bounded function compactly supported in
$\{Q>0\}$, we conclude that (\ref{neccond}) holds.\end{proof}

\section{On the Madiman-Wang conjecture}\label{MWsec}

\begin{prop}\label{keyprop}
The generalized Gaussian is not the extremizer for problem (\ref{convprob}).
\end{prop}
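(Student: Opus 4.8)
The plan is to exploit the Euler--Lagrange characterization: by the preceding proposition, every maximizer $Q$ must satisfy (\ref{neccond}), so it suffices to exhibit a single consequence of (\ref{neccond}) that the generalized Gaussian violates. It is enough to find one choice of parameters for which this happens, and I take $d=1$, $p=2$, $n=2$. In this range a candidate maximizer is even and radially decreasing (Proposition \ref{existence}), so by Remark \ref{raddecrem} the reflection $\T$ disappears and, since $\conv_1(Q)=Q$, equation (\ref{neccond}) collapses to
\begin{equation*}
\conv_3(Q) = \frac{\Lambda}{2M}\,Q + \frac{\Lambda}{2}\quad\text{on }\{Q>0\}.
\end{equation*}
I would then prove Proposition \ref{keyprop} by showing that the one-dimensional generalized Gaussian $G_{\beta,2}(x)=\alpha(1-\beta x^2)_+$ cannot satisfy this identity.

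The key observation is a mismatch of local polynomial degree on the support. On $\{Q>0\}=(-1/\sqrt\beta,\,1/\sqrt\beta)$ the right-hand side is an affine function of $Q$, hence a polynomial of degree $2$ in $x$; in particular its fourth derivative vanishes identically there. It therefore suffices to show that the left-hand side $\conv_3(Q)=Q*Q*Q$ has a nonvanishing fourth derivative at some interior point of $\{Q>0\}$, and I would test at the origin (the odd-order derivatives of the even function $\conv_3(Q)$ vanish there, so the fourth derivative is the first natural candidate). By Lemma \ref{scaling}, which maps generalized Gaussians to generalized Gaussians and preserves the structure of (\ref{neccond}) up to constants, it is enough to treat $Q(x)=\alpha(1-x^2)_+$; general $\beta$ merely rescales the final answer by a positive factor.

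To compute $(Q*Q*Q)^{(4)}(0)$ I would pass to distributional derivatives, writing $(\conv_3 Q)^{(4)} = Q''*Q''*Q$, where $Q''=-2\alpha\,\mathbf{1}_{(-1,1)}+2\alpha(\delta_{1}+\delta_{-1})$. Convolving the last factor with the continuous, compactly supported $Q$ regularizes the measure $Q''*Q''$, so $\conv_3 Q\in C^4$ and the pointwise evaluation at the interior point $0$ is legitimate. Expanding $Q''*Q''$ into its absolutely continuous part (a triangle function minus a multiple of $\mathbf{1}_{(-2,2)}$) and its atomic part $4\alpha^2(\delta_2+2\delta_0+\delta_{-2})$, and pairing against $Q$, one finds $(\conv_3 Q)^{(4)}(0)=8\alpha^3-2\alpha^3=6\alpha^3\neq 0$ (and $6\alpha^3\beta$ for general $\beta$). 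This contradicts the vanishing of the fourth derivative forced by the displayed equation, so the generalized Gaussian fails (\ref{neccond}) and, by the preceding proposition, is not a maximizer.

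I expect the main technical points to be twofold: first, justifying that $\conv_3 Q$ is genuinely $C^4$ near the origin, so that the distributional identity yields an honest pointwise value; and second, the careful bookkeeping of the atomic contributions in $Q''*Q''$, since it is precisely the Dirac masses at $\pm 2$ and $0$ that produce the surviving $8\alpha^3$, and discarding them would spuriously suggest cancellation. The scaling reduction and the verification that (\ref{neccond}) indeed collapses as claimed are routine by comparison.
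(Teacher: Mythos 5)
Your proposal is correct and follows essentially the same route as the paper: reduce to $d=1$, $p=2$, $n=2$, observe that the Euler--Lagrange equation would force $\conv_3(G)$ to be a quadratic polynomial on the support of $G$, and refute this by differentiating distributionally via $G''=2\alpha(\delta_{-1}-\chi_{[-1,1]}+\delta_1)$. The only difference is in the order of differentiation: the paper detects the obstruction in the sixth derivative $G''*G''*G''$ (where the term $(-2\alpha\chi_{[-1,1]})^{*3}$ contributes an uncancelled quadratic), whereas you stop at the fourth derivative $G''*G''*G$, which is continuous near the origin and evaluates there to $6\alpha^3\neq 0$ --- your bookkeeping of the atomic and absolutely continuous parts is accurate, and this variant arguably makes the pointwise evaluation cleaner to justify.
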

\begin{proof} Consider the simplest case $d=1$, $p=2$, and $n=2$.  We shall show
that the function $G(x) = \alpha(1-|x|^2)_+$ does not satisfy the
equation
\begin{equation}\label{3conv}\mathcal{C}_3(f)=af+b\text{ on
  }[-1,1]\text{ with }a,b>0,
  \end{equation}
  and so no function of the
form $\frac{c}{\lambda}G(\frac{\cdot}{\lambda})$, with $c,\lambda>0$,
satisfies (\ref{neccond}), for any value of $\Lambda$ (recall Remark \ref{raddecrem}).  In fact,
we shall show that $\mathcal{C}_3(G)=G*G*G$ is not a quadratic
polynomial near $0$.

For this, observe:
$$G'' = 2\alpha(\delta_{-1} - \chi_{[-1,1]} +\delta_{1}).
$$ Thus, $(G*G*G)'''''' = (G''*G''*G'')$ is the threefold convolution
of the above measure.  The threefold convolution of $-2\chi_{[-1,1]}$
equals $-8(3-|x|^2)_+$ on $[-1,1]$, and no other term in the convolution
$G''*G''*G''$ is quadratic in $|x|$.  Therefore, $G*G*G$ has
non-vanishing sixth derivative at $0$, but $a+bG$ does have
vanishing sixth derivative at $0$.
\end{proof}

\begin{rem} Moreover, for any dimension $d$, the random vector $X$ in $\R^d$ with i.i.d. coordinates $X_i$, each distributed according to the generalized Gaussian density, does not constitute the extremizer for this problem. Indeed, in this case $h_p(X)=dh_p(X_i)$, and it remains to use Proposition \ref{keyprop}. Therefore, a random vector with i.i.d. coordinates which are generalized Gaussians is not an extremal case for this question.
\end{rem}

\section{Any radially decreasing solution of (\ref{neccond}) is compactly supported}

In this section, we discuss the following

\begin{prop}
Decreasing radial solutions of (\ref{neccond})
are compactly supported.
\end{prop}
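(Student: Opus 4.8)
The plan is to argue by contradiction, and the whole argument rests on a single integrability obstruction. Suppose $Q$ is a radially decreasing solution of (\ref{neccond}) that is \emph{not} compactly supported. Since $Q$ is radially decreasing, if it vanished at some point $x_0$ it would vanish on all of $\{|x|\ge |x_0|\}$ and hence be compactly supported; thus non-compact support forces $Q(x)>0$ for every $x\in\R^d$, i.e. $\{Q>0\}=\R^d$. Consequently (\ref{neccond}) holds almost everywhere on all of $\R^d$. Moreover, by Remark \ref{raddecrem} the operator $\T$ acts trivially on $\conv_{n-1}(Q)$, so the left-hand side of (\ref{neccond}) is simply $\conv_{n-1}(Q)*[\conv_n(Q)]^{p-1}$.

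The key step I would establish is that this left-hand side lies in $L^{p'}(\R^d)$, where $p'=p/(p-1)$. Indeed, since $Q\in\F$ we have $\|\conv_{n-1}(Q)\|_1=\|Q\|_1^{n-1}=1$, so $\conv_{n-1}(Q)\in L^1$; and by the iterated Young inequality (\ref{nyoung}) we have $\conv_n(Q)\in L^p$ with $\|\conv_n(Q)\|_p^p=\I(Q)<\infty$, whence $[\conv_n(Q)]^{p-1}\in L^{p'}$. Applying Young's convolution inequality with exponents $(1,p',p')$ then gives
$$\bigl\|\conv_{n-1}(Q)*[\conv_n(Q)]^{p-1}\bigr\|_{p'}\le \|\conv_{n-1}(Q)\|_1\,\bigl\|[\conv_n(Q)]^{p-1}\bigr\|_{p'}<\infty,$$
so the left-hand side of (\ref{neccond}) is a genuine $L^{p'}(\R^d)$ function.

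The contradiction then comes from the right-hand side. Since $Q^{p-1}\ge 0$ and $\Lambda,M>0$, and since $n\ge 2$ (for $n=1$ the equation degenerates: $\conv_0(Q)=\delta_0$ reduces (\ref{neccond}) to $Q^{p-1}=\tfrac{\Lambda}{M}Q^{p-1}$, which carries no constant term and no support information), we have
$$\frac{\Lambda}{Mn}Q^{p-1}+\frac{\Lambda(n-1)}{n}\ge \frac{\Lambda(n-1)}{n}>0 \quad\text{on }\R^d.$$
Because (\ref{neccond}) now holds on all of $\R^d$, the left-hand side is therefore bounded below by the fixed positive constant $\tfrac{\Lambda(n-1)}{n}$ almost everywhere on $\R^d$. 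But a function bounded below by a positive constant on the whole space cannot belong to $L^{p'}(\R^d)$, contradicting the previous paragraph. Hence $Q$ must be compactly supported.

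The argument is short, and the one point requiring care is the $L^{p'}$ membership of the left-hand side, which is exactly where the constraint $Q\in\F$ (through $\|\conv_{n-1}(Q)\|_1=1$ and $\conv_n(Q)\in L^p$) and the iterated Young inequality enter; I do not expect a genuine obstacle there. The more delicate conceptual point is the role of the radially decreasing hypothesis: it is precisely what converts ``non-compact support'' into the global positivity $\{Q>0\}=\R^d$, so that (\ref{neccond}) may be invoked at every point. For a general solution the positivity set could be unbounded yet fail to fill $\R^d$, and the integrability contradiction would not immediately apply. If one preferred a limiting formulation one could instead note that $\conv_{n-1}(Q)*[\conv_n(Q)]^{p-1}$ is again radially decreasing (Remark \ref{raddecrem}) and hence tends to $0$ at infinity, while the right-hand side tends to the strictly positive constant $\tfrac{\Lambda(n-1)}{n}$ because $Q\to 0$; the $L^{p'}$ argument above is preferable since it avoids having to justify passage to the limit.
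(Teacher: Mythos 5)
Your argument is correct and rests on the same obstruction as the paper's proof: since $Q$ is radially decreasing, non-compact support forces $\{Q>0\}=\R^d$, so (\ref{neccond}) holds globally; the left-hand side lies in some $L^r(\R^d)$ with $r<\infty$, while the right-hand side is bounded below by the positive constant $\tfrac{\Lambda(n-1)}{n}$ and therefore cannot. The difference is in how the integrability of the left-hand side is established. The paper takes $r=\max(1,1/(p-1))$ and splits into cases: for $p\geq 2$ it integrates $\conv_{n-1}(Q)*[\conv_n(Q)]^{p-1}$ directly, using $\conv_n(Q)\in L^1\cap L^p\subset L^{p-1}$, and for $1<p<2$ it applies Jensen's inequality to the convex map $t\mapsto t^{1/(p-1)}$ to dominate the $L^{1/(p-1)}$ norm by $\|\conv_{2n-1}(Q)\|_1$. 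You instead note that $\|[\conv_n(Q)]^{p-1}\|_{p'}^{p'}=\I(Q)<\infty$ and apply Young's inequality with exponents $(1,p',p')$, which covers all $p\in(1,\infty)$ in one stroke. This is a genuine (if modest) streamlining: it eliminates the case distinction and the Jensen step, while using exactly the same inputs, namely $\|\conv_{n-1}(Q)\|_1=1$ and $\I(Q)=\Lambda<\infty$. Your side remarks are also sound: the radially decreasing hypothesis is precisely what upgrades ``unbounded support'' to ``positive everywhere,'' and the restriction to $n\geq 2$ is needed for the constant term to be present.
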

\begin{proof} Suppose that $Q\in \F$ solves (\ref{neccond}) and $Q$
is not compactly supported.  Since $Q$ is non-negative and radially
decreasing, its support is $\R^d$.

The term $G = \conv_{n-1}(Q)*(\conv_n(Q))^{p-1}$ on the left hand side
of (\ref{neccond}) belongs to $L^r$, where $r=\max(1, 1/(p-1))$.
Indeed, if $p\geq 2$ then $\int_{\R^d} G(x) d\mu_d(x) =
\int_{\R^d}[\conv_n(Q)(x)]^{p-1}d\mu_d(x)$ (recall that $Q\geq 0$ with
$\int_{\R^d}Q(x)d\mu_d(x)=1$), but
$\int_{\R^d}\conv_n(Q)(x)d\mu_d(x)=1$ and
$$\int_{\R^d}\conv_n(Q)^p(x)d\mu_d(x)=\Lambda <\infty,$$ so
$\conv_n(Q)\in L^{p-1}(\R^d)$.  If $1<p<2$, then $t\mapsto
t^{1/(p-1)}$ is convex, so by Jensen's inequality, $G^{1/(p-1)}\leq
\conv_{n-1}(Q)*(\conv_n(Q)^{p-1})^{1/(p-1)} = \conv_{2n-1}(Q)$, whence
$\|G\|_{1/(p-1)}\leq 1$ in this case.

On the other hand, the right hand side of (\ref{neccond}) belongs to $L^r$ only if $\Lambda=0$, which is absurd, since $\mathcal{F}$ certainly contains non-zero functions. 
\end{proof}



\section{Remarks}

In this section we make some remarks that suggest that although
the generalized Gaussian is not an optimal distribution for the
problem (\ref{convprob}), a reasonably small perturbation of the
generalized Gaussian could well be.

Beginning with $f_0(x) = \mathbf{1}_{[-1,1]}$, consider the following iteration for $j\geq 1$
$$f_{j}(x) = \frac{\mathcal{C}_3(f_{j-1})(x) -
  \mathcal{C}_3(f_{j-1})(1)}{\mathcal{C}_3(f_{j-1})(0) -
  \mathcal{C}_3(f_{j-1})(1)}.
$$

Numerically, this iteration converges pointwise to a
solution of the equation (\ref{3conv}) for some $a,b>0$ satisfying the
constraints $f(0)=1$ and $f(1)=0$ (so the support of $f$ is $[-1,1]$).
The resulting function $f$ can then be re-scaled via the
transformation $\frac{c}{\lambda}f(\tfrac{\cdot}{\lambda})$
($c,\lambda>0$) to have any given positive integral and $L^2$-norm.
We do not know if the solution of $\mathcal{C}_3(f) = af+b$ is unique
(modulo natural invariants in the problem), so we cannot say that this
function $f$ corresponds to a solution of the constrained maximization
problem (\ref{convprob}).

We provide the graphs of $f_1, f_2, f_3$ and $f_4$ (see Figure
\ref{fig:iterates} below), and the algebraic expressions for $f_1$,
$f_2$ and $f_3$ on $[-1,1]$.
\begin{gather*} f_1(x) = 1 - x^2,\; f_2(x) = 1 - \frac{6 x^2}{5} + \frac{x^4}{5}\\
f_3(x) = 1 - \frac{62325 x^2}{50521} + \frac{12810 x^4}{50521} -
\frac{1050 x^6}{50521} + \frac{45 x^8}{50521} -
\frac{x^{10}}{50521}.\end{gather*}

\begin{figure}[h!]
  \centering
  \begin{subfigure}[b]{0.2\linewidth}
    \includegraphics[width=\linewidth]{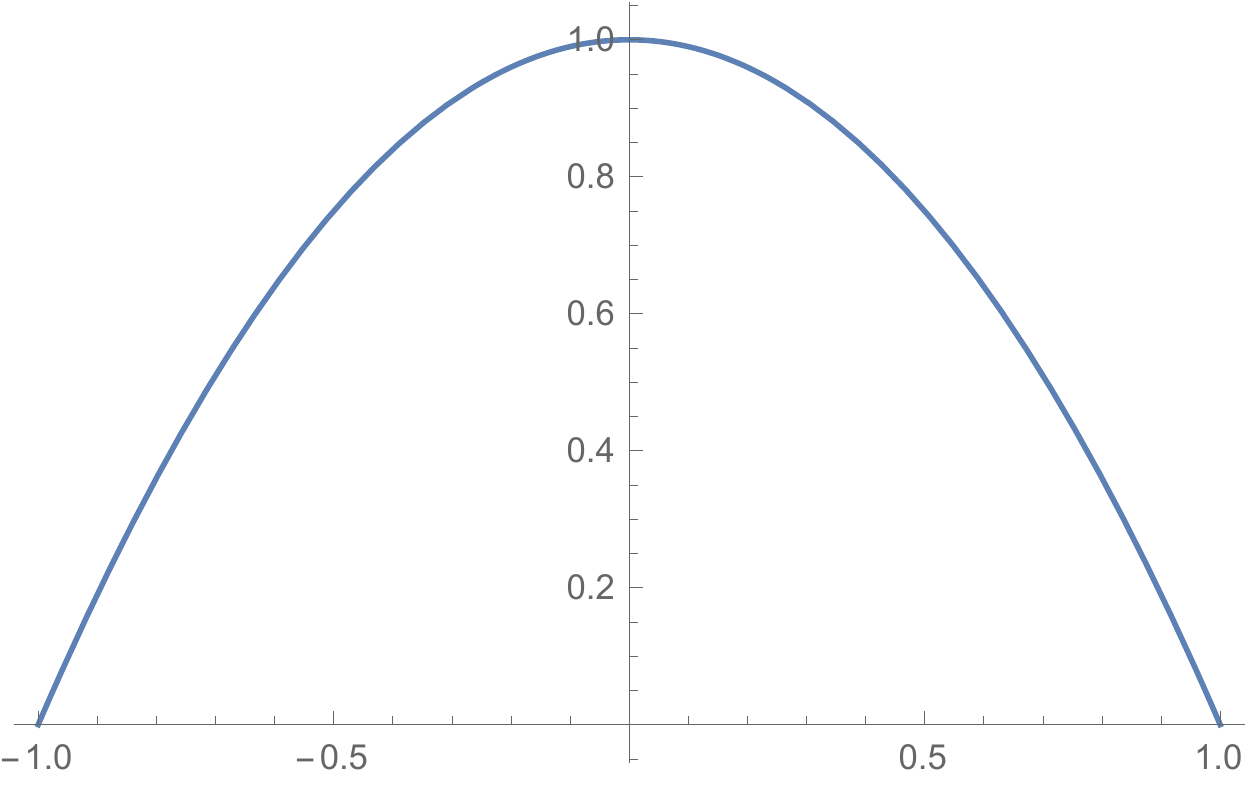}
     \caption{$f_1$.}
  \end{subfigure}
  \begin{subfigure}[b]{0.2\linewidth}
    \includegraphics[width=\linewidth]{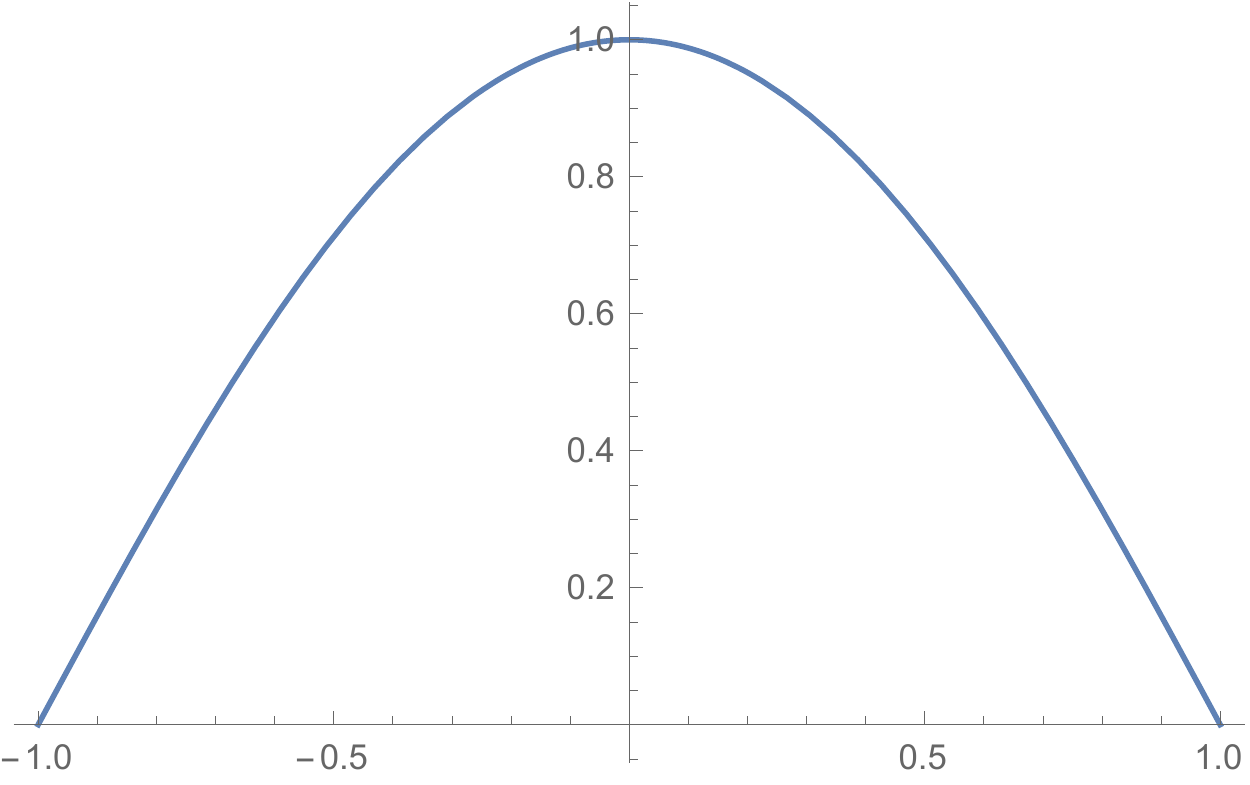}
    \caption{$f_2$.}
  \end{subfigure}
  \begin{subfigure}[b]{0.2\linewidth}
    \includegraphics[width=\linewidth]{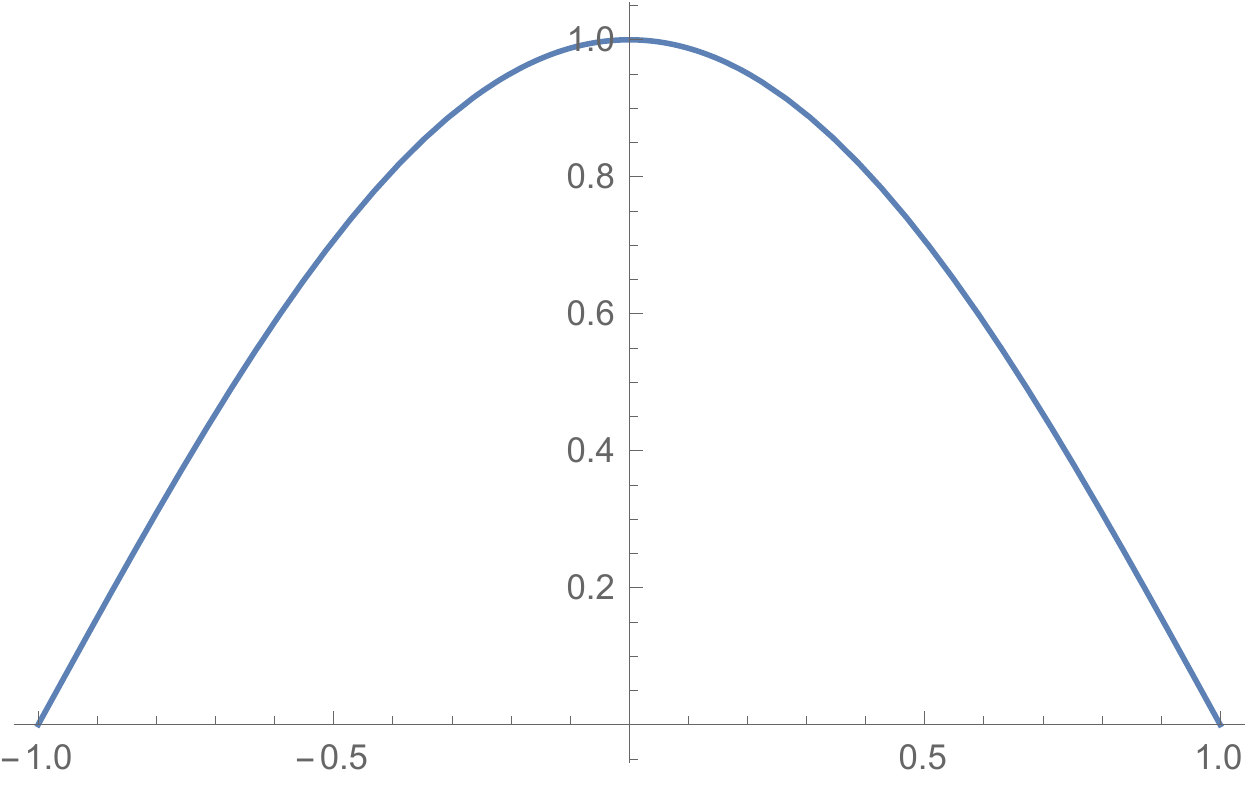}
    \caption{$f_3$.}
  \end{subfigure}
  \begin{subfigure}[b]{0.5\linewidth}
    \includegraphics[width=\linewidth]{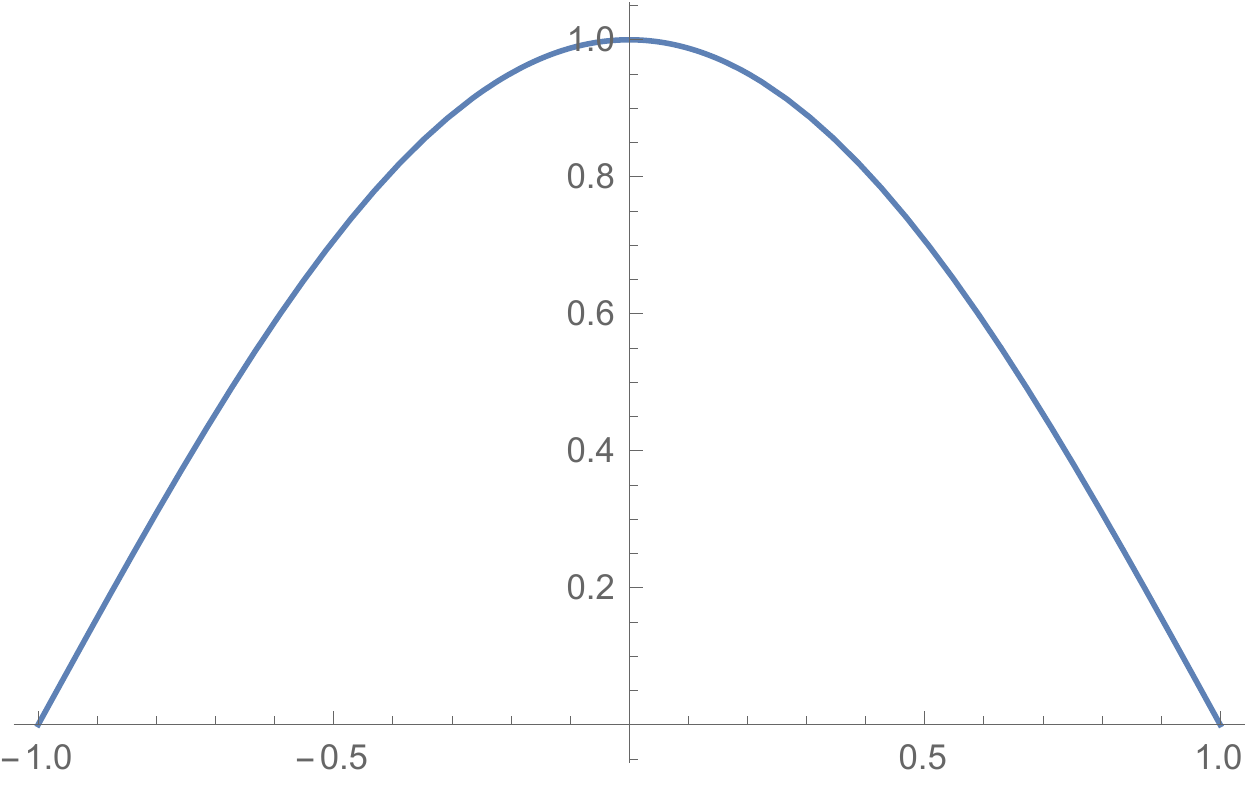}
    \caption{$f_4$.}
  \end{subfigure}
  \caption{The graphs of $f_1, \dots, f_4$ on $[-1,1]$.}
  \label{fig:iterates}
\end{figure}


\pagebreak

\end{document}